\newtheorem{theorem}{Theorem}
\newtheorem{lemma}{Lemma}
\newtheorem{proposition}{Proposition}
\newtheorem{remark}{Remark}
\newtheorem{example}{Example}
\newtheorem{corollary}{Corollary}
\title{Sum-of-Square Proof for Brascamp-Lieb Type Inequalities}
\date{}							
\author{Zhixian Lei, Yueqi Sheng}
\begin{document}
\maketitle

\begin{abstract}
\emph{Brascamp-Lieb} inequalities \cite{brascamp1976best} is an important mathematical tool in analysis, geometry and information theory. There are various ways to prove Brascamp-Lieb inequality such as heat flow method \cite{bennett2008brascamp}, Brownian motion \cite{lehec2013short} and subadditivity of the entropy \cite{carlen2009subadditivity}. While Brascamp-Lieb inequality is originally stated in Euclidean Space, \cite{christ2013optimal} discussed Brascamp-Lieb inequality for discrete Abelian group and \cite{barthe2010correlation} discussed Brascamp-Lieb inequality for Markov semigroups.

Many mathematical inequalities can be formulated as algebraic inequalities which asserts some given polynomial is nonnegative. In 1927, Artin proved that any nonnegative polynomial can be represented as a sum of squares of rational functions \cite{delzell1984continuous}, which can be further formulated as a polynomial certificate of the nonnegativity of the polynomial. This is a \emph{Sum-of-Square proof} of the inequality. 
The Sum-of-Square proof can be captured by 
\emph{Sum-of-Square algorithm} which is a powerful tool for optimization and computer aided proof. For more about Sum-of-Square algorithm, see \cite{barak2014sum}. 

In this paper, we give a Sum-of-Square proof for some special settings of Brascamp-Lieb inequality following \cite{christ2015holder} and \cite{bennett2008brascamp} and discuss some applications of Brascamp-Lieb inequality on Abelian group and Euclidean Sphere. If the original description of the inequality has constant degree and $d$ is constant, the degree of the proof is also constant. Therefore, low degree sum of square algorithm can fully capture the power of low degree finite Brascamp-Lieb inequality. 
\end{abstract}

\section{Introduction}
\subsection{Brascamp-Lieb inequality}
Many important inequalities including Holder's inequality, Loomis-Whitney inequality, Young's convolution inequality, hypercontractivity inequalities are special case of Brascamp-Lieb inequality introduced by \cite{brascamp1976best}. 
The original form of Brascamp-Lieb inequality on Euclidean Space $\mathbb{R}^n$ is 
\begin{equation}
\label{BL inequality Euclidean}
\int_{x \in \mathbb{R}^n}\prod_{j=1}^m(f_j(B_jx))^{p_j}dx \leq C \prod_{j=1}^m \left(\int_{x_j \in \mathbb{R}^{n_j}}f_j(x_j)dx_j\right)^{p_j}
\end{equation}
where 
\begin{enumerate}
\item $B_j : \mathbb{R}^n \to \mathbb{R}^{n_j}$ are linear surjective maps
\item $f_j: \mathbb{R}^{n_j} \to \mathbb{R}$ are nonnegative functions 
\item $p_j$ are nonnegative reals. 
\item $C$ is positive and independent of $f_j$
\end{enumerate}
Following theorem \cite{bennett2008brascamp} gives the condition when (\ref{BL inequality Euclidean}) holds. 
\begin{proposition} (\ref{BL inequality Euclidean}) holds if and only if 
\begin{enumerate}
\item $n = \sum_j p_j n_j$
\item dim$(V)$ $\leq \sum_jp_j$dim$(B_jV)$ for all subspaces $V$ of $\mathbb{R}^n$
\end{enumerate}
\end{proposition}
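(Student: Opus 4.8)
The plan is to prove the two directions separately. Necessity of the conditions (1) and (2) follows from elementary scaling and test-function arguments, while the sufficiency direction is the substantial one, which I would handle through the heat-flow monotonicity method.

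First, for \emph{necessity} I would exploit the homogeneity of the inequality. Replacing each $f_j$ by the rescaled function $f_j(\cdot/t)$ multiplies $\int f_j$ by $t^{n_j}$ and, after the substitution $y=x/t$ on the left, multiplies the left-hand side by $t^n$. Thus the inequality applied to the rescaled data reads $t^n\,\mathrm{LHS}(f)\le C\,t^{\sum_j p_j n_j}\,\mathrm{RHS}(f)$ for every $t>0$; letting $t\to 0$ and $t\to\infty$ forces the exponents to agree, which is precisely condition (1), $n=\sum_j p_j n_j$. For condition (2), fix a subspace $V$ and write $d=\dim V$ and $d_j=\dim(B_jV)$. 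I would test the inequality on functions concentrated near the $B_jV$: let $f_j$ be the indicator of a box of width $1$ along $B_jV$ and width $\delta$ in the orthogonal complement, so that $\int f_j\asymp\delta^{\,n_j-d_j}$. The set of $x$ with $B_jx\in\operatorname{supp}f_j$ for all $j$ is, up to constants, a $\delta$-neighborhood of a bounded piece of $V$, of measure $\asymp\delta^{\,n-d}$. The inequality then yields $\delta^{\,n-d}\lesssim\delta^{\sum_j p_j(n_j-d_j)}$, and letting $\delta\to0$ gives $n-d\ge\sum_j p_j(n_j-d_j)$; subtracting the scaling identity (1) leaves exactly $d\le\sum_j p_j d_j$.

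For \emph{sufficiency} I would follow the heat-flow method. Let $u_j(t,\cdot)$ solve the heat equation on $\mathbb{R}^{n_j}$ with initial datum $f_j$, and set $Q(t)=\int_{\mathbb{R}^n}\prod_j u_j(t,B_jx)^{p_j}\,dx$. The key step is to differentiate in $t$ and show, under (1) and (2), that a suitably normalized version of $Q$ is monotone: as $t\to0^+$ one recovers the left-hand side, while as $t\to\infty$ the heat flow drives the $u_j$ toward Gaussians, where the endpoint is controlled by $\prod_j(\int f_j)^{p_j}$. After integration by parts, the sign of $Q'(t)$ reduces to a pointwise linear-algebraic inequality among the $B_j$ weighted by $p_j$, so it suffices to treat the Gaussian case directly. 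Taking $f_j(y)=e^{-\pi\langle A_jy,y\rangle}$ with $A_j$ positive definite turns both sides into Gaussian integrals, and the inequality collapses to a uniform determinant bound of the form $\det\bigl(\sum_j p_j B_j^{\top}A_jB_j\bigr)\ge c\,\prod_j(\det A_j)^{p_j}$, with $c>0$ independent of the $A_j$, which I would then derive from the subspace condition (2).

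The hard part will be the sufficiency direction, and within it precisely the step linking the geometric hypothesis (2) to the analytic inequality. Concretely, the crux is showing that $\dim V\le\sum_j p_j\dim(B_jV)$ for all $V$ is exactly what guarantees the Gaussian determinant bound (equivalently, the nonnegativity of $Q'(t)$) uniformly over all positive-definite $A_j$; this is a convexity and linear-algebra argument that must be run over the whole lattice of subspaces, and it is where the real content of the theorem lies. A secondary, more routine difficulty is justifying the limits $t\to0^+$ and $t\to\infty$ together with differentiation under the integral sign, which I would dispatch by first reducing to smooth, compactly supported $f_j$ via a standard density argument and then passing to the general case.
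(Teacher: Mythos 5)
This proposition is stated in the paper without proof: it is imported directly from the cited reference \cite{bennett2008brascamp}, so there is no internal argument to compare against, and your outline in fact follows the same route as that reference (scaling and test functions for necessity, heat flow and Gaussian reduction for sufficiency), which the paper's abstract alludes to. Your necessity direction is correct and essentially complete: the scaling argument forces $n=\sum_j p_j n_j$, and testing on indicators of boxes of width $1$ along $B_jV$ and width $\delta$ transverse to it gives $n-\dim V\ge\sum_j p_j(n_j-\dim B_jV)$, which combined with the scaling identity yields condition (2).

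The sufficiency direction, however, contains a genuine gap, and it sits exactly at the point you flag as the crux. Your sketch reduces everything to the uniform Gaussian determinant bound $\det\bigl(\sum_j p_jB_j^{\top}A_jB_j\bigr)\ge c\,\prod_j(\det A_j)^{p_j}$ and then asserts that this follows from condition (2) by ``a convexity and linear-algebra argument.'' That step \emph{is} the theorem: in \cite{bennett2008brascamp} the finiteness of the Gaussian Brascamp--Lieb constant under (1) and (2) is established by an induction on dimension organized around \emph{critical subspaces} --- subspaces $W$ with $\dim W=\sum_j p_j\dim(B_jW)$ --- splitting the data into $W$ and $V/W$, together with a separate treatment of the case where no nontrivial critical subspace exists (where the feasible exponents are shown to lie in $\{0,1\}^m$). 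No amount of pointwise convexity substitutes for this inductive structure. Moreover, the two halves of your sketch cannot be decoupled as written: the heat-flow monotonicity $Q'(t)\ge0$ is not an unconditional pointwise fact but holds only after the data are suitably normalized (reduced to geometric or extremal form), and producing that normalization again invokes the Gaussian finiteness analysis, so ``it suffices to treat the Gaussian case directly'' is circular unless the determinant bound has been independently established. A useful pointer: the discrete analogue of precisely this missing induction is what the present paper carries out in its Sections 3 and 4 (the decomposition $V=W\oplus V/W$, induction on $\dim V$, and the $p\in\{0,1\}^m$ case), so the shape of the argument your proposal still needs can be read off from there.
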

The inequality is saturated when $f_j$ are centered Gaussian functions and the optimal $C$ has the form:
\begin{equation}
C = \left[\sup\frac{\prod_j(\det X_j)^{p_j}}{\det\left(\sum_j p_jB_j^TX_jB_j\right)}\right]^{1/2} 
\end{equation}
where the supreme is taken over all positive semidefinite matrix $X_j$ in dimension $n_j$. Moreover, the value of optimal $C$ can be $(1 + \epsilon)$-approximated in time $\textrm{poly}(\frac{1}{\epsilon})$(see \cite{garg2016algorithmic})


To formulate (\ref{BL inequality Euclidean}) in an algebraic form, we replace integration by taking finite summation.
\begin{equation}
\label{BL inequality sum}
\sum_{x \in V}\prod_{j=1}^m(f_j(B_jx)) \leq C \prod_{j=1}^m \left(\sum_{x_j \in B_jV} f_j(x_j)^{1/p_j}\right)^{p_j}
\end{equation}
Where $V$ is a finite space. (\ref{BL inequality sum}) can also be simplified as
\begin{equation}
\label{BL sum}
\sum_V\prod_{j=1}^m(f_j \circ B_j) \leq C \prod_{j=1}^m \|f_j\|_{1/p_j}
\end{equation}
where $\| f_j \|_{1/p_j}= (\sum_{x_j \in B_jV} f_j(x_j)^{1/p_j})^{p_j}$.

\subsection{Sum-of-Square proof}
A Sum-of-Square proof for polynomial $P \geq 0$ is to give the following certificate
\begin{equation}
\label{certificate}
S_1P - S_2 = 0
\end{equation}
where $S_1$ and $S_2$ are sum of square of polynomials. The degree of the proof is the degree of (\ref{certificate}). 

For the simplicity of exposition, we will give a Sum-of-Square proof in an iterative way with following deduction rules. 
\begin{eqnarray*}
P_1 \geq 0, P_2 \geq 0 & \Longrightarrow  & P_1 + P_2 \geq 0 \\
P_1 \geq 0, P_2 \geq 0 & \Longrightarrow &  P_1P_2 \geq 0 \\
& \Longrightarrow  & P_1^2 \geq 0
\end{eqnarray*}
where $P_1, P_2$ are polynomials. To prove $P \geq 0$, in the end we should derive 
\begin{equation*}
SP \geq 0, S \geq 0
\end{equation*}
for some polynomial $S$. The degree of the proof is also accumulated with deduction. 
\begin{eqnarray*}
\deg(P_1 + P_2) & = &\max\{\deg(P_1), \deg(P_2)\} \\
\deg(P_1P_2) & =  &\deg(P_1) + \deg(P_2) \\
\deg(P_1^2) & = & 2\deg(P_1)
\end{eqnarray*}
The degree of the proof is the largest degree which appears in the deduction. 

In sum of square algorithm, \emph{Pseudo distribution} is a dual certificate for Sum-of-Square proof. Pseudo distribution is not necessary a real distribution. Instead the only requirements for a degree d pseudo distribution is to have a corresponding \emph{pseudo expectation} $\tilde{\mathbb{E}}$ to satisfy
\begin{enumerate}
\item $\tilde{\mathbb{E}}1 = 1$
\item $\tilde{\mathbb{E}}P + \tilde{\mathbb{E}}Q = \tilde{\mathbb{E}}(P + Q)$ for all polynomial $P$ and $Q$ of degree no more than $d$
\item $\tilde{\mathbb{E}}P^2 \geq 0$ for all polynomial $P$ of degree no more than $d/2$
\end{enumerate} 
If degree $d$ Sum-of-Square cannot prove  $P \geq 0$, then there exists a degree $d$ pseudo distribution satisfies $\tilde{\mathbb{E}}P < 0$. In this way, degree $d$ pseudo distribution captures the power the degree $d$ Sum-of-Square proof. Pseudo distribution is a more general notion than Sum-of-Square proof. We can implicitly evaluate pseudo expectation $\tilde{\mathbb{E}}f$ for any function $f$ in any space without giving a polynomial form. 

\subsection{Our result}
Consider $V$ as a finite subset of $\mathbb{Z}^n$ with a set of linear projections  $\{B_j : \mathbb{Z}^n \to \mathbb{Z}^{n_j}\}$. Let $\{f_j: V_j \to \mathbb{R}\}$ be a set of non-negative functions and define $V_j = B_j(V)$. Then we have
\begin{theorem}
If all $p_j$ satisfies $p_j \geq 0$ and 
\begin{equation*}
\dim(W) \leq \sum_jp_j\dim(B_jW)\textrm{ for all subspaces }W\textrm{ of }V
\end{equation*}
Then 
\begin{equation}
\label{sos proof}
\sum_V\prod_{j=1}^m(f_j \circ B_j) \leq \prod_{j=1}^m \|f_j\|_{1/p_j}
\end{equation}
can be proved by by degree $O(n^mm^{m/2} + s\sum_{j=1}^ms_j)$ Sum-of-Square where $s_j, s$ are integers, $s_j/s = p_j$, $s$ is the least common denominator of all $p_j$.
\end{theorem}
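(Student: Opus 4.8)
The plan is first to clear all fractional exponents so that (\ref{sos proof}) becomes an honest polynomial inequality amenable to the deduction rules. Writing $p_j = s_j/s$ and substituting $f_j = g_j^{s_j}$ (legitimate since $f_j \geq 0$, so $g_j = f_j^{1/s_j} \geq 0$), and then raising both sides of (\ref{sos proof}) to the $s$-th power, the inequality becomes
\[
\left(\sum_{x \in V}\prod_{j=1}^m (g_j \circ B_j)^{s_j}\right)^{\!s} \leq \prod_{j=1}^m\left(\sum_{x_j \in V_j} g_j(x_j)^{s}\right)^{\!s_j},
\]
a polynomial inequality in the finitely many nonnegative variables $\{g_j(x_j) : x_j \in V_j\}$, homogeneous of degree $s\sum_{j} s_j$ on each side. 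This substitution is exactly where the $s\sum_j s_j$ term in the degree bound originates, and since every quantity in sight is nonnegative, any SoS derivation of this polynomial inequality is equivalent to a certificate for (\ref{sos proof}).

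The atomic fact I would establish next is that the finite weighted H\"older inequality admits a constant-degree SoS proof under the given deduction rules. For nonnegative variables $a_{ij}$ and rational weights $\theta_j = r_j/r$ with $\sum_j \theta_j = 1$, the generalized H\"older bound $\sum_i \prod_j a_{ij}^{\theta_j} \le \prod_j (\sum_i a_{ij})^{\theta_j}$ becomes, after the substitution $a_{ij}=c_{ij}^{r}$ and raising to the power $r$,
\[
\left(\sum_i \prod_j c_{ij}^{r_j}\right)^{\!r} \le \prod_j \left(\sum_i c_{ij}^{r}\right)^{\!r_j}.
\]
This is a rational weighted AM--GM statement, and I would record an explicit certificate for it built from repeated use of the two-term bound $x^2+y^2 \geq 2xy$ together with multiplication of nonnegative quantities, whose degree is $O(r)$. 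This is the reusable building block for everything that follows.

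With H\"older available, I would prove (\ref{sos proof}) by induction on $\dim V = n$, driven by the hypothesis $\dim W \le \sum_j p_j \dim B_j W$. The inductive engine is the Bennett--Carbery--Christ--Tao factorization: if there is a proper nonzero subspace $W$ for which the dimension condition holds with equality, then splitting $\sum_V = \sum_{\bar x \in V/W}\sum_{y \in W}$ by Fubini lets me apply the inductive hypothesis to the (critical, hence scaling-balanced) data induced on the fiber $W$ and then to the quotient data on $V/W$, recombining the two families of norms through the H\"older certificate above. Each such factorization is a finite composition of the deduction rules, so it preserves the SoS structure and, crucially, keeps the constant equal to $1$ precisely because $W$ is critical. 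In the remaining \emph{simple} case, where no proper critical subspace exists, the data reduces after projecting out a single generic coordinate to a one-dimensional problem solved directly by H\"older, lowering $n$ by one. A recursion of depth $O(n)$ in which each level recombines at most $m$ subproblems through the H\"older certificate, with the degree growth bounded accordingly, yields the $n^m m^{m/2}$ contribution, additive with the base degree $s\sum_j s_j$.

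The hard part will be the recombination step inside the factorization. After applying the inductive hypothesis on each fiber I obtain products of slice-norms $\|f_j^{(\bar x)}\|_{1/p_j}$, and I must re-express these in the original variables and feed them into the quotient inequality without breaking the polynomial (integer-exponent) form or inflating the degree. Showing that these fractional slice-norms recombine through a \emph{single} bounded-degree H\"older application, and that the simple case genuinely bottoms out in one dimension rather than stalling, is where the degree bookkeeping --- and hence the precise $n^m m^{m/2}$ exponent --- must be argued with care.
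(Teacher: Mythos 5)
Your skeleton (polynomialization via $f_j = g_j^{s_j}$, a constant-degree SoS H\"older/AM--GM building block, and induction on $\dim V$ through critical subspaces) matches the paper's overall plan, but there is a genuine gap in your ``simple'' case, and it is exactly the point where the paper deploys machinery you omitted. You run the induction for an \emph{arbitrary} feasible $p$ and dichotomize: either a proper critical subspace exists, or else you ``project out a single generic coordinate'' and finish with one-dimensional H\"older. That last step is unjustified and, as stated, not workable: a generic interior point of the feasibility polytope has \emph{no} nontrivial critical subspace at all (e.g.\ Loomis--Whitney data in dimension $3$ with $p = (0.6, 0.6, 0.6)$, where every subspace inequality is strict), yet the data stays genuinely $n$-dimensional and $p \notin \{0,1\}^m$; nothing in your argument forces such $p$ down to a one-dimensional problem.

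The paper resolves this by first reducing to \emph{extreme points} of the polytope $Q(V)$: a convexity lemma (proved by substituting powers of the $f_j$ and applying H\"older) shows that validity of the inequality at two exponent vectors implies validity at every convex combination, so the induction only ever needs to handle vertices. At a vertex the dichotomy really does close up, via Christ's structure result: a vertex with no nontrivial critical subspace must lie in $\{0,1\}^m$, and that case is settled not by H\"older but by a counting/injectivity argument --- feasibility applied to $\bigcap_{p_j=1}\ker(B_j)$ forces this intersection to be $0$, so every term on the left-hand side appears on the right-hand side. Your proposal is missing both the convexity reduction and this $\{0,1\}^m$ endgame. The omission also undermines your degree bookkeeping: the $n^m m^{m/2}$ term is not a recursion-depth/fan-out count; it is a Cramer's-rule-plus-Hadamard bound on the denominators of the coordinates of the vertices of $Q(V)$, which is what controls the exponents (hence the SoS degree) of the H\"older steps performed at those vertices. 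Without the extreme-point reduction, nothing in your argument produces --- or explains the need for --- that quantity.
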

Note that the degree of (\ref{sos proof}) are $s\sum_{j=1}^ms_j$, if we take the degree of original expression of the inequality to be constant, $m, s, s_j = O(1)$, then the degree of the pseudo expectation is also a constant, and the degree of Sum-of-Square proof becomes poly$(n^{O(1)})$. 
\begin{remark}
In fact, consider space $\mathbb{Z}^n$ is quite general because we can reduce $\mathbb{Q}^n$ to $\mathbb{Z}^n$ by normalizing every points and projections to integral ones since the inequality only involves finite many points and projections. And in fact we can even generalize this discrete inequality on any set of points if we can embed these points on $Z^n$ with proper definition of linear projection. 
\end{remark}
We will see that for this finite discrete Brascamp-Lieb inequality, there are still many famous inequalities can be formulated in this way.
\begin{example}[Holder's inequality]
When $n = 1$ and $m = 2$, consider non-negative functions $f$ and $g$ with all projections to be identity we have 
\begin{equation*}
\sum f(x)g(x) \leq \lVert f\rVert_{1/p} \lVert g \rVert_{1/q}
\end{equation*}
when $p + q = 1$. When $p = q = 1/2$ this gives Cauchy-Schwarz ineqaulity
\end{example}
\begin{example}[Loomis-Whitney inequality]
when $m = n$, $B_j$ are projections to the orthogonal complement to each coordinate and all $p_j$ are $1/(n-1)$ the Brascamp-Lieb inequality gives exactly the Loomis-Whitney inequality. For instance, when $n = 3$ we have
\begin{equation*}
\sum_{x, y, z}f(y, z)g(x, z)h(x, y) \leq \lVert f \rVert_2 \lVert g \rVert_2 \lVert h \rVert_2
\end{equation*}
which has deep interpretations in geometry.
\end{example}

\section{Sum-of-Square proof for Holder's inequality}
In this section, we give the Sum-of-Square proof of \emph{Holder's inequality} and analyze the degree of the proof for future use. First we give the proof for \emph{Cauchy-Schwarz inequality}. 
\begin{lemma} [Cauchy-Schwarz inequality]
\begin{equation*}
\tilde{\mathbb{E}}fg \leq (\tilde{\mathbb{E}}f^2)^{1/2}(\tilde{\mathbb{E}}g^2)^{1/2}
\end{equation*}
is satisfied by degree $2(\deg(f)+\deg(g))$ pseudo distribution
\end{lemma}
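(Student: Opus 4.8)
The plan is to reduce the stated inequality to its squared form $(\tilde{\mathbb{E}}fg)^2 \le (\tilde{\mathbb{E}}f^2)(\tilde{\mathbb{E}}g^2)$ and then recover the bound by taking square roots. First I would note that, by the third pseudo-distribution axiom, $\tilde{\mathbb{E}}f^2 \ge 0$ and $\tilde{\mathbb{E}}g^2 \ge 0$ as soon as the degree is large enough to accommodate $f^2$ and $g^2$, so the right-hand side $(\tilde{\mathbb{E}}f^2)^{1/2}(\tilde{\mathbb{E}}g^2)^{1/2}$ is a well-defined nonnegative real. If $\tilde{\mathbb{E}}fg \le 0$ the claim is immediate; otherwise it follows from the squared form by monotonicity of $t \mapsto \sqrt{t}$ on the nonnegative reals. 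Thus the whole content lies in certifying the squared inequality inside the pseudo-distribution calculus.

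For the core step I would introduce an independent second copy $x'$ of the underlying variable, work with the product pseudo-expectation $\tilde{\mathbb{E}} \otimes \tilde{\mathbb{E}}$, and use the antisymmetrized square
\begin{equation*}
\bigl(f(x)g(x') - f(x')g(x)\bigr)^2 \ge 0 .
\end{equation*}
This is a genuine sum of squares, and as a polynomial in the joint variables $(x,x')$ it has degree $2(\deg f + \deg g)$, which is exactly the degree named in the statement; hence a degree $2(\deg f + \deg g)$ pseudo-distribution assigns it nonnegative pseudo-expectation. Expanding and applying linearity (axiom 2) together with the product structure $\tilde{\mathbb{E}}[p(x)q(x')] = (\tilde{\mathbb{E}}p)(\tilde{\mathbb{E}}q)$ collapses the three terms into
\begin{equation*}
2(\tilde{\mathbb{E}}f^2)(\tilde{\mathbb{E}}g^2) - 2(\tilde{\mathbb{E}}fg)^2 \ge 0 ,
\end{equation*}
which is precisely the squared inequality. (An alternative single-variable route is to observe that $\tilde{\mathbb{E}}(\lambda f - g)^2 \ge 0$ for every real $\lambda$, i.e. the $2\times 2$ moment matrix with entries $\tilde{\mathbb{E}}f^2, \tilde{\mathbb{E}}fg, \tilde{\mathbb{E}}g^2$ is positive semidefinite, whose nonnegative determinant yields the same conclusion; this uses only degree $2\max(\deg f,\deg g)$, but the antisymmetrized version is the one matching the stated degree and giving a clean distribution-independent certificate.)

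I expect the main obstacle to be bookkeeping and degeneracy rather than the core idea. Concretely, I must confirm that the product pseudo-expectation on the two copies is itself a valid degree $2(\deg f + \deg g)$ pseudo-distribution, so that the square axiom legitimately applies to the antisymmetrized square; this is where the degree count is pinned down. I must also handle the degenerate case $\tilde{\mathbb{E}}f^2 = 0$ (or $\tilde{\mathbb{E}}g^2 = 0$) in the final square-root extraction: there the squared inequality forces $\tilde{\mathbb{E}}fg = 0$, so the claimed bound still holds. Once these two points are checked, the deduction is just the two displayed lines above combined with the reduction of the first paragraph.
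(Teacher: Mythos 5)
Your proof is correct, but it takes a genuinely different route from the paper's. The paper never leaves the given pseudo-distribution: it applies the square axiom once to $(f'-g')^2$, where $f' = f/(\tilde{\mathbb{E}}f^2)^{1/2}$ and $g' = g/(\tilde{\mathbb{E}}g^2)^{1/2}$ are rescalings by real constants, getting $\tilde{\mathbb{E}}f'g' \leq \frac{1}{2}\tilde{\mathbb{E}}f'^2 + \frac{1}{2}\tilde{\mathbb{E}}g'^2 = 1$ and then clearing denominators; this only invokes squares of degree $2\max(\deg f,\deg g)$, under the stated bound. Your main route instead passes to two independent copies and the antisymmetrized square $(f(x)g(x')-f(x')g(x))^2$, which buys a distribution-independent polynomial certificate of the squared inequality and a clean treatment of the degenerate case $\tilde{\mathbb{E}}f^2 = 0$ (where the paper's normalization silently divides by zero). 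The price is the lemma you flag but do not prove: that $\tilde{\mathbb{E}} \otimes \tilde{\mathbb{E}}$ over disjoint variable sets is again a valid pseudo-distribution of degree $2(\deg f + \deg g)$. This is true and standard --- the joint moment matrix is a principal submatrix of the tensor product $M \otimes M$ of moment matrices of $\tilde{\mathbb{E}}$ indexed by monomials of degree at most $\deg f + \deg g$, and tensor products of PSD matrices are PSD --- but it must be stated and proved for your argument to be complete. Your parenthetical single-variable alternative, nonnegativity of the determinant of the $2\times 2$ matrix with entries $\tilde{\mathbb{E}}f^2, \tilde{\mathbb{E}}fg, \tilde{\mathbb{E}}g^2$ obtained from $\tilde{\mathbb{E}}(\lambda f - g)^2 \geq 0$, is essentially the paper's argument in disguise and is the lighter-weight way to close that gap.
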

\begin{proof}
By $(f - g)^2 \geq 0$ we have 
\begin{equation*}
\tilde{\mathbb{E}}fg \leq \frac{1}{2}\tilde{\mathbb{E}}f^2 + \frac{1}{2}\tilde{\mathbb{E}}g^2
\end{equation*}
Let $f' = f/(\tilde{\mathbb{E}}f^2)^{1/2}$ and $g' = g/(\tilde{\mathbb{E}}g^2)^{1/2}$ then 
\begin{equation*}
\tilde{\mathbb{E}}f'g' = \frac{\tilde{\mathbb{E}}fg}{(\tilde{\mathbb{E}}f^2)^{1/2}(\tilde{\mathbb{E}}g^2)^{1/2}} \leq \frac{1}{2}\tilde{\mathbb{E}}f'^2 + \frac{1}{2}\tilde{\mathbb{E}}g'^2 = 1
\end{equation*}
therefore
\begin{equation*}
\tilde{\mathbb{E}}fg \leq (\tilde{\mathbb{E}}f^2)^{1/2}(\tilde{\mathbb{E}}g^2)^{1/2}
\end{equation*}
\end{proof}
By taking the pseudo distribution as uniform distribution, we also get the sum of square proof of Cauchy-Schwarz inequality
\begin{corollary} [Cauchy-Schwarz inequality]
\begin{equation*}
\sum f(x)g(x) \leq \|f\|_{2}\|g\|_{2}
\end{equation*}
has a degree $2(\deg(f)+\deg(g))$ Sum-of-Square proof
\end{corollary}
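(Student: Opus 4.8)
The plan is to obtain the Corollary as a direct specialization of the preceding Cauchy-Schwarz Lemma, exploiting the fact that the Lemma's proof holds for \emph{every} pseudo-distribution of degree $2(\deg(f)+\deg(g))$. First I would observe that any genuine probability distribution is in particular a valid pseudo-distribution of every degree: its expectation operator is linear, sends $1$ to $1$, and assigns a nonnegative value to every square, so all three axioms of a pseudo-expectation are satisfied a fortiori. In particular, the normalized uniform distribution over the finite domain $V$, whose expectation is $\tilde{\mathbb{E}}h = \frac{1}{|V|}\sum_{x \in V} h(x)$, is such a pseudo-distribution, and the Lemma therefore applies to it.

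Next I would substitute this $\tilde{\mathbb{E}}$ into the conclusion of the Lemma, giving $\frac{1}{|V|}\sum_V fg \leq \left(\frac{1}{|V|}\sum_V f^2\right)^{1/2}\left(\frac{1}{|V|}\sum_V g^2\right)^{1/2}$. Multiplying through by $|V|$, the two factors of $|V|^{-1/2}$ on the right combine into a single $|V|^{-1}$ that cancels the $|V|$ produced on the left, leaving $\sum_V fg \leq \left(\sum_V f^2\right)^{1/2}\left(\sum_V g^2\right)^{1/2} = \|f\|_2\|g\|_2$, which is exactly the claimed inequality. Since this specialization alters none of the algebraic steps, the degree of the resulting proof should be inherited verbatim from the Lemma, namely $2(\deg(f)+\deg(g))$.

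The point that needs care — and the step I expect to be the main obstacle — is arguing that this specialized pseudo-distribution argument really is a \emph{Sum-of-Square proof} of the stated degree, rather than merely a valid numerical inequality. The subtlety is the rescaling $f \mapsto f/(\tilde{\mathbb{E}}f^2)^{1/2}$ used inside the Lemma: under the uniform distribution the normalizers $(\tilde{\mathbb{E}}f^2)^{1/2}$ and $(\tilde{\mathbb{E}}g^2)^{1/2}$ are square roots and hence not polynomials. I would resolve this by noting that, once $V$ is fixed and the values $\{f(x),g(x)\}_{x\in V}$ are taken as the formal variables, these normalizers are \emph{positive constants}, so the only genuine Sum-of-Square step is still the single certificate $(f'-g')^2 \geq 0$, and scaling it by these positive constants neither introduces new variables nor raises its degree. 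Equivalently, to sidestep square roots altogether I could certify the squared form $(\sum_V fg)^2 \leq (\sum_V f^2)(\sum_V g^2)$ directly through the Lagrange identity $(\sum_V f^2)(\sum_V g^2) - (\sum_V fg)^2 = \frac{1}{2}\sum_{x,y\in V}(f(x)g(y)-f(y)g(x))^2$, whose right-hand side is manifestly a sum of squares of degree $2(\deg(f)+\deg(g))$, and then take nonnegative square roots at the level of real numbers to recover the stated inequality.
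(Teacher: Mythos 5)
Your proposal is correct and matches the paper's argument exactly: the paper proves this corollary in one line by ``taking the pseudo distribution as uniform distribution'' in the preceding Lemma, which is precisely your specialization $\tilde{\mathbb{E}}h = \frac{1}{|V|}\sum_{x\in V}h(x)$. Your additional care about the scalar normalizers (and the Lagrange-identity fallback) is a reasonable elaboration of a point the paper glosses over, but it does not change the route.
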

Using Cauchy-Schwarz inequality, we can further prove Holder's inequality
\begin{lemma} [Holder's inequality] when $p + q = 1$
\begin{equation*}
\tilde{\mathbb{E}}fg \leq (\tilde{\mathbb{E}}f^{1/p})^p(\tilde{\mathbb{E}}g^{1/q})^q
\end{equation*}
is satisfied by degree $s(s_1 + s_2)(\deg(f) + \deg(g))$ pseudo distribution where $s, s_1, s_2$ are integers, $p = s_1/s$, $q = s_2/s$, $s$ is the least common denominator of $p$ and $q$
\end{lemma}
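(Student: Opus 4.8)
The plan is to bootstrap Hölder from the Cauchy--Schwarz lemma by iterating the latter along a one-parameter family of moments, which lets us synthesize the arbitrary rational weight $p = s_1/s$. First I would clear denominators. Since $p + q = 1$ forces $s_1 + s_2 = s$, I set $F = f^{1/s_1}$ and $G = g^{1/s_2}$, so that $f = F^{s_1}$, $g = G^{s_2}$, $f^{1/p} = F^{s}$, and $g^{1/q} = G^{s}$. With this relabelling the lemma is exactly the integer-exponent inequality
\[ \tilde{\mathbb{E}}[F^{s_1}G^{s_2}] \leq (\tilde{\mathbb{E}}F^{s})^{s_1/s}(\tilde{\mathbb{E}}G^{s})^{s_2/s}, \]
for nonnegative $F, G$ with $s_1 + s_2 = s$, which is also the shape in which the main theorem will invoke it.

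The mechanism is to establish log-convexity of a moment sequence and then interpolate between its endpoints. For $k = 0, 1, \dots, s$ define $c_k = \tilde{\mathbb{E}}[F^{2k}G^{2(s-k)}]$; each $c_k = \tilde{\mathbb{E}}[(F^{k}G^{s-k})^2] \geq 0$ by the positivity axiom. For every interior index $k$ I would apply the Cauchy--Schwarz lemma to the two polynomials $P = F^{k-1}G^{s-k+1}$ and $Q = F^{k+1}G^{s-k-1}$, whose product is $F^{2k}G^{2(s-k)}$ and whose squares are precisely the integrands of $c_{k-1}$ and $c_{k+1}$. This gives $c_k \leq c_{k-1}^{1/2}c_{k+1}^{1/2}$, i.e. the three-term log-convexity $c_k^2 \leq c_{k-1}c_{k+1}$. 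The reason for squaring the exponents into $2k$ rather than splitting $F^{k}G^{s-k}$ directly is that $P$ and $Q$ must be honest polynomials for the positivity axiom $\tilde{\mathbb{E}}P^2 \geq 0$ to apply; an even split of $F^{k}G^{s-k}$ would demand half-integer powers that lie outside the degree-$d$ cone.

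It remains to interpolate and to count degree. From log-convexity the ratios $c_k/c_{k-1}$ are nondecreasing, so the average of the first $s_1$ of the numbers $\log(c_k/c_{k-1})$ is at most the average of all $s$ of them; rearranging gives
\[ c_{s_1} \leq c_0^{\,s_2/s}\,c_s^{\,s_1/s}, \]
and substituting $c_{s_1} = \tilde{\mathbb{E}}[F^{2s_1}G^{2s_2}]$, $c_0 = \tilde{\mathbb{E}}G^{2s}$, $c_s = \tilde{\mathbb{E}}F^{2s}$ recovers the target. Since the $c_k$ are already evaluated pseudo-expectations, hence genuine nonnegative reals, this interpolation is elementary arithmetic and adds no SoS degree. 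For the degree of the proof, the sequence has $s$ interior points, each invocation of Cauchy--Schwarz runs on monomials of total exponent at most $s$ and so has degree $O(s(\deg f + \deg g))$, and combining the chain of $s$ inequalities multiplicatively accumulates this to $s(s_1 + s_2)(\deg f + \deg g)$, as claimed. The hard part is exactly this last point: realizing a general non-dyadic weight $s_1/s$ while keeping every Cauchy--Schwarz factor polynomial. Repeated halving would only ever reach dyadic weights, so one is driven onto the full unit-step ladder of $s$ moments; the delicate bookkeeping is to arrange the even-indexed family so that all $s$ splits are simultaneously integral and the accumulated degree stays polynomial in $s$, which is what pushes the degree to $\Theta(s^2)$ rather than $\Theta(\log s)$.
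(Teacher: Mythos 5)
Your proof is correct, but it takes a genuinely different route from the paper's. The paper bootstraps H\"older from Cauchy--Schwarz by greedy halving: it writes $\tilde{\mathbb{E}}fg = \tilde{\mathbb{E}}[(fg^{1-1/2q})(g^{1/2q})]$, peels off the factor $(\tilde{\mathbb{E}}g^{1/q})^{1/2}$, and recurses on the exponent pair $(2p, 2q-1)$. That recursion is the doubling map on $[0,1]$, so it terminates exactly only when $p$ is dyadic (as in the paper's $(3/8,5/8)$ example); for a general rational $p = s_1/s$ the paper itself only says the scheme ``iteratively approximates'' the inequality. Your ladder argument repairs exactly this weakness: after clearing denominators you prove log-convexity $c_k^2 \le c_{k-1}c_{k+1}$ of the moments $c_k = \tilde{\mathbb{E}}[F^{2k}G^{2(s-k)}]$ with one Cauchy--Schwarz application per interior index (your choice $P = F^{k-1}G^{s-k+1}$, $Q = F^{k+1}G^{s-k-1}$ keeps every factor an honest polynomial), then interpolate between the endpoints; this is exact and finite for every rational exponent, at the price of $s-1$ invocations instead of roughly $\log s$. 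Three remarks. First, your closing degree count is conceptually off, though harmlessly so: once the $c_k$ are evaluated they are ordinary nonnegative reals, so chaining the $s$ inequalities is arithmetic and ``accumulates'' no SoS degree; the degree of your proof is just the largest degree appearing in any single Cauchy--Schwarz invocation, i.e.\ $O(s(\deg F + \deg G))$, which sits inside (indeed below) the lemma's stated bound $s(s_1+s_2)(\deg f + \deg g)$, so nothing is ``pushed to $\Theta(s^2)$.'' Second, the monotone-ratio step tacitly assumes all $c_k > 0$; the degenerate case is trivial, since $c_j = 0$ for any $j$ forces, via $c_k^2 \le c_{k-1}c_{k+1}$, every interior moment including $c_{s_1}$ to vanish. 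Third, your conclusion is H\"older with $F^2, G^2$ in place of $F, G$, i.e.\ for squares; that is the right SoS encoding of the nonnegativity hypothesis, it matches how the main theorem invokes the lemma, and the paper's own iteration implicitly needs the same convention to keep its fractional powers polynomial, so this is a feature of the setting rather than a gap.
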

\begin{proof}
We can iteratively approximate the inequality using Cauchy-Schwarz inequality. Since $p + q = 1$, one of $p, q$ is no less than $1/2$. Without loss of generality, assume $q \geq 1/2$. If $q = 1/2$, the inequality becomes Cauchy-Schwarz inequality. If $q > 1/2$, We have by Cauchy-Schwarz inequality
\begin{equation*}
\tilde{\mathbb{E}}fg = \tilde{\mathbb{E}}fg^{1-1/2q}g^{1/2q} \leq (\tilde{\mathbb{E}}f^2g^{2-1/q})^{1/2}(\tilde{\mathbb{E}}g^{1/q})^{1/2}
\end{equation*}
It remains to prove $(\tilde{\mathbb{E}}f^2g^{2-1/q})^{1/2} \leq (\tilde{\mathbb{E}}f^{1/p})^{p}(\tilde{\mathbb{E}}g^{1/q})^{q-1/2}$. Notice that the exponent $p, q-1/2$ on right hand side is decreased. In next iteration, we will subtract the max of $p$ and $q-1/2$ by 1/4. In this way, we can iteratively approximate Holder's inequality. The degree is Sum-of-Square proof is determined by the fractional expression of $p$ and $q$. If we assume the degree in the expression of original inequality to be constant, The degree of Sum-of-Square proof for Holder's inequality is also constant. 
\end{proof}
\begin{example}
\begin{equation*}
\tilde{\mathbb{E}}fg \leq (\tilde{\mathbb{E}}f^{8/3})^{3/8}(\tilde{\mathbb{E}}g^{8/5})^{5/8}
\end{equation*}
is satisfied by constant degree pesudo distribution. 
\end{example}
\begin{proof}
\begin{eqnarray*}
& & \tilde{\mathbb{E}}fg \\
& \leq & (\tilde{\mathbb{E}}f^2g^{2/5})^{1/2} (\tilde{\mathbb{E}}g^{8/5})^{1/2} \\
& \leq & (\tilde{\mathbb{E}}f^{8/3})^{1/4} (\tilde{\mathbb{E}}f^{4/3}g^{4/5})^{1/4}  (\tilde{\mathbb{E}}g^{8/5})^{1/2} \\
& \leq & (\tilde{\mathbb{E}}f^{8/3})^{1/4} (\tilde{\mathbb{E}}f^{8/3})^{1/8} (\tilde{\mathbb{E}}g^{8/5})^{1/8}  (\tilde{\mathbb{E}}g^{8/5})^{1/2} \\
& = & (\tilde{\mathbb{E}}f^{8/3})^{3/8}(\tilde{\mathbb{E}}g^{8/5})^{5/8}
\end{eqnarray*}
\end{proof}
Also, by assuming pseudo distribution as uniform distribution, we have
\begin{corollary} [Holder's inequality] when $p + q = 1$
\begin{equation*}
\sum fg \leq \|f\|_{1/p}\|g\|_{1/q}
\end{equation*}
is satisfied by degree $s(s_1 + s_2)(\deg(f) + \deg(g))$ pseudo distribution where $s, s_1, s_2$ are integers, $p = s_1/s$, $q = s_2/s$, $s$ is the least common denominator of $p$ and $q$
\end{corollary}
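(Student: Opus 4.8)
The plan is to produce an explicit Sum-of-Square certificate for the summation inequality, treating the nonnegative values $\{f(x)\}_{x \in V}$ and $\{g(x)\}_{x \in V}$ as the indeterminates, and then to invoke the duality recalled above: a degree-$d$ Sum-of-Square certificate for $\|f\|_{1/p}\|g\|_{1/q} - \sum_V fg \geq 0$ is, by definition, exactly a witness that $\tilde{\mathbb{E}}\big(\|f\|_{1/p}\|g\|_{1/q} - \sum_V fg\big) \geq 0$ for \emph{every} degree-$d$ pseudo-distribution $\tilde{\mathbb{E}}$. This is the crux of the statement as worded: I want the inequality to be satisfied by all degree-$d$ pseudo-distributions, which is strictly more than evaluating it at the honest uniform distribution on $V$, so I must exhibit the polynomial certificate itself rather than merely specialize an expectation.

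The first step is to remove the fractional exponents hidden in the norms so that the target becomes a genuine polynomial inequality. Writing $p = s_1/s$ and $q = s_2/s$ with $s_1 + s_2 = s$, I introduce auxiliary nonnegative functions $u, v$ with $f = u^{s_1}$ and $g = v^{s_2}$, and raise the target inequality to the $s$-th power (valid since both sides are nonnegative) to obtain the integer-exponent polynomial inequality
\begin{equation*}
\left(\sum_V u^{s_1} v^{s_2}\right)^{\!s} \leq \left(\sum_V u^{s}\right)^{\!s_1}\left(\sum_V v^{s}\right)^{\!s_2}.
\end{equation*}
This reformulation is the source of the factor $s(s_1 + s_2)$ in the degree bound, while the factor $\deg(f) + \deg(g)$ records the degrees of $f$ and $g$ as polynomials in the underlying variables; the total degree is thus inherited verbatim from the preceding Hölder Lemma.

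The second step reproduces, at the level of honest polynomial certificates, the iterated Cauchy-Schwarz reduction used to prove that Hölder Lemma, but anchored at the summation Cauchy-Schwarz Corollary instead of at its pseudo-expectation form. Each reduction splits one factor as $fg = (f g^{1 - 1/2q})\,(g^{1/2q})$ and applies Cauchy-Schwarz, which at the certificate level is the nonnegativity of a single square; composing these squares through the product and sum deduction rules recalled above assembles a certificate $S_1\big(\text{RHS} - \text{LHS}\big) = S_2$ with $S_1, S_2$ sums of squares. Since the larger of the two exponents drops by $1/4$ at each step, the recursion terminates after $O(s)$ steps at the Cauchy-Schwarz base case, and the degrees accumulate to the claimed $s(s_1 + s_2)(\deg(f) + \deg(g))$.

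The main obstacle is not the combinatorics of the iteration but the fact that the Lemma's argument normalizes by $f' = f/(\tilde{\mathbb{E}}f^2)^{1/2}$, which is a division, not a Sum-of-Square operation; taken literally it produces only a pseudo-expectation inequality, which is exactly the weaker reading that a uniform specialization yields. The resolution is precisely the $s$-th power reformulation performed first: once the inequality carries only integer exponents, each Cauchy-Schwarz step is a bona fide Sum-of-Square deduction and no division survives in the final polynomial identity. Consequently the assembled certificate witnesses the inequality for every degree-$d$ pseudo-distribution, which is the content of the corollary as stated.
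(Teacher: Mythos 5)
Your proposal is correct in substance but takes a genuinely different route from the paper. The paper's entire proof of this corollary is a single line: it specializes the preceding Lemma (the pseudo-expectation form of H\"older's inequality) by taking the pseudo-distribution to be the uniform distribution on $V$, so that $\tilde{\mathbb{E}}h = \frac{1}{|V|}\sum_V h$ and the normalizing factors cancel because $p + q = 1$. You instead rebuild an explicit polynomial certificate: substituting $f = u^{s_1}$, $g = v^{s_2}$ and raising to the $s$-th power to clear all fractional exponents, then redoing the iterated Cauchy--Schwarz reduction at the level of sum-of-squares identities, anchored at the summation Cauchy--Schwarz corollary rather than its pseudo-expectation form. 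What your route buys: you correctly observe that the Lemma's normalization $f' = f/(\tilde{\mathbb{E}}f^2)^{1/2}$ is a division by the square root of a polynomial in the indeterminates $\{f(x)\}$, not by a scalar, so the paper's one-line specialization, read literally, only yields the numerical inequality (a true distribution is in particular a pseudo-distribution) rather than a syntactic certificate valid against \emph{all} degree-$d$ pseudo-distributions; your integer-exponent reformulation repairs exactly this, and is arguably what the paper would need to make the corollary precise. What the paper's route buys is brevity and degree bookkeeping inherited for free from the Lemma. Two caveats in your write-up, both at the same level of looseness as the paper itself: the equivalence you invoke between ``a degree-$d$ certificate exists'' and ``every degree-$d$ pseudo-distribution satisfies the inequality'' is not ``by definition'' --- one direction requires convex duality, which the paper also only gestures at --- and your claim that the recursion terminates in $O(s)$ steps with degrees accumulating to exactly $s(s_1+s_2)(\deg(f)+\deg(g))$ is asserted rather than verified; each Cauchy--Schwarz step at the certificate level forces a squaring of the running inequality, so an honest count must track powers of $2$ interacting with $s$, a point the paper's own proof of the Lemma glosses over as well.
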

From next section we will use Holder's inequality to prove more general Brascamp-Lieb inequality without considering the degree increased by Holder's inequality since the degree increasing is explicitly shown in the expression. 

\section{Reduce Brascamp-Lieb inequality to extreme points}
Recall that for finite $V \subseteq \mathbb{R}^n$ and projections $B_j : V \to V_j$, 
we will give a Sum-of-Square proof of
\begin{equation}
\label{sos}
\sum_V\prod_{j=1}^m(f_j \circ B_j) \leq \prod_{j=1}^m \|f_j\|_{1/p_j}
\end{equation}
Let $p = (p_1, p_2, \ldots, p_m)$ Define $P(V)$ as the feasible region of $p$ in (\ref{sos}) 
\begin{equation*}
P(V) = \{p \mid \textrm{for all } j, p_j \geq 0, \sum_{j=1}^m p_j = 1\} 
\end{equation*}
$P(V)$ is a bounded polytope. For the feasible region of $p$ in (\ref{sos}), notice $p_j$ is not upper bounded. But in fact, we can require $p_j \leq 1$ for all $j$. Define $Q(V)$ as the feasible region of $p$ in (\ref{sos})
\begin{equation*}
Q(V) = \{p \mid p \in [0, 1]^m, \dim(W) \geq  \sum_jp_j\dim(B_jW)\textrm{ for all subspace } W\textrm{ of } V\}
\end{equation*}
Next we prove the validity of requiring $p_j \leq 1$
\begin{lemma}
If (\ref{sos}) has Sum-of-Square proof for all $p \in Q(V)$, then (\ref{sos}) has Sum-of-Square proof for all feasible $p$.
\end{lemma}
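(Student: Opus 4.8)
The plan is to leverage the monotonicity of the quasi-norm $\|f\|_{1/p}$ in the exponent $p$: enlarging any $p_j$ only \emph{weakens} the inequality (\ref{sos}), so it suffices to prove it at the componentwise-smallest feasible exponents, and one then argues that these already lie in the box $[0,1]^m$, i.e.\ in $Q(V)$.

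First I would record the monotonicity as an auxiliary Sum-of-Square fact: for non-negative $f$ and $0 \le p' \le p$ one has $\|f\|_{1/p'} \le \|f\|_{1/p}$, equivalently $\big(\sum f^{1/p'}\big)^{p'} \le \big(\sum f^{1/p}\big)^{p}$. This is exactly the statement that the $\ell_r$-norm is non-increasing in $r$ (here $r = 1/p' \ge 1/p$), and as a Holder/power-mean comparison it is already captured by the Sum-of-Square proof of Holder's inequality established above; its degree is polynomial in the denominators of $p'$ and $p$. I would isolate this as a short consequence of the Holder proof already in hand.

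Second comes the geometric reduction. Given a feasible $p$, I would produce $p' \in Q(V)$ with $p'_j \le p_j$ for every $j$, the natural candidate being the coordinatewise cap $p'_j = \min\{p_j, 1\}$. Granting such a $p'$, the transfer is mechanical: from the assumed Sum-of-Square proof of $\sum_V \prod_j (f_j \circ B_j) \le \prod_j \|f_j\|_{1/p'_j}$ I multiply in, via the product deduction rule, the $m$ norm-comparison proofs $\|f_j\|_{1/p'_j} \le \|f_j\|_{1/p_j}$ from the first step, obtaining a Sum-of-Square proof of (\ref{sos}) for $p$ whose degree is the sum of the constituent degrees.

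The main obstacle is the geometric step, precisely because capping an exponent down to $1$ \emph{decreases} each quantity $\sum_j p_j \dim(B_j W)$, so a priori it could break one of the dimension inequalities that define feasibility. I would attack this with a polyhedral argument on the Brascamp-Lieb polytope: since every coefficient $\dim(B_j W)$ is non-negative, feasibility is upward closed, so it is enough to understand the coordinatewise-minimal points and to show that at such a point no single exponent is forced strictly above $1$. Establishing this — ruling out a tight constraint coming from a subspace $W$ whose image survives only under a single $B_{j_0}$ — is the delicate, structure-dependent heart of the lemma; once it is in place, the monotonicity transfer above finishes the proof and the degree bound is just additive bookkeeping.
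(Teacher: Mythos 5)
Your transfer mechanism is exactly the paper's own proof: cap the exponents at $1$, invoke monotonicity of the norms in the exponent, and multiply the resulting inequalities. The monotonicity step is unproblematic --- the paper proves precisely the case it needs, $\|f_j\|_1 \leq \|f_j\|_{1/p_j}$ for $p_j > 1$, by normalizing $f_j' = f_j/\|f_j\|_1$ and observing $\|f_j'\|_{1/p_j} \geq 1 = \|f_j'\|_1$. The genuine gap in your proposal is the step you yourself flag and then leave open: that the capped vector $p'_j = \min\{p_j, 1\}$ actually lies in $Q(V)$. You call this ``the delicate, structure-dependent heart of the lemma'' and sketch a polyhedral attack via coordinatewise-minimal points of the feasible region; that attack points in the wrong direction (upward closedness of feasibility says nothing about moving \emph{down} to the cap), and without this claim the whole transfer has nothing to transfer from.

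The claim is true, and what is missing is a short rank--nullity argument, not a polyhedral one. Let $T = \{j : p_j > 1\}$ and let $W$ be any subspace of $V$. Apply the feasibility of $p$ to the subspace $W' = W \cap \bigcap_{j \in T}\ker(B_j)$: since $B_jW' = 0$ for $j \in T$ and $B_jW' \subseteq B_jW$ for all $j$,
\begin{equation*}
\dim(W') \leq \sum_{j \notin T} p_j \dim(B_jW') \leq \sum_{j \notin T} p_j \dim(B_jW).
\end{equation*}
On the other hand, $W'$ is the kernel of the map $W \to \bigoplus_{j \in T} B_jW$ sending $x \mapsto (B_jx)_{j \in T}$, so
\begin{equation*}
\dim(W') \geq \dim(W) - \sum_{j \in T}\dim(B_jW).
\end{equation*}
Combining the two displays gives
\begin{equation*}
\dim(W) \leq \sum_{j \in T} 1 \cdot \dim(B_jW) + \sum_{j \notin T} p_j \dim(B_jW) = \sum_j p'_j \dim(B_jW),
\end{equation*}
which is exactly the constraint defining membership of $p'$ in $Q(V)$ (the displayed definition of $Q(V)$ in the paper has the inequality reversed, but that is a typo: the direction consistent with the Theorem and with the later $\{0,1\}^m$ lemma is $\dim(W) \leq \sum_j p_j\dim(B_jW)$). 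In particular, the configuration you worried about --- a tight constraint from a subspace $W$ whose image survives only under a single $B_{j_0}$ with $p_{j_0} > 1$ --- cannot occur under feasibility, precisely because feasibility applied to $W \cap \ker(B_{j_0})$ forbids it. To be fair, the paper asserts $p' \in Q(V)$ with no justification at all, so you correctly isolated the one nontrivial point of the lemma; you just did not close it, and closing it is a three-line argument rather than a delicate structural one.
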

\begin{proof}
We want to prove (\ref{sos}) for feasible $p$ with some $p_j > 1$, let $p'_j = p_j$ when $p_j \leq 1$ and $p'_j = 1$ when $p_j > 1$, then $p' \in Q(V)$ so we have
\begin{equation*}
\sum_V \prod_{j=1}^mf_j \circ B_j \leq \prod_{j=1}^m \| f_j\|_{1/p_j'}
\end{equation*}
Further we can prove that $\|f_j\|_1 \leq \|f_j\|_{1/p_j}$ for $p_j > 1$. Let $f_j' = f_j/\|f_j\|_1$ we have
\begin{equation*}
\|f_j'\|_{1/p_j} = \left\| \frac{f_j}{\|f_j\|_1} \right\|_{1/p_j} \geq 1 = \|f_j'\|_1
\end{equation*}
Combining above gives Sum-of-Square proof for $p$
\begin{equation*}
\sum_V \prod_{j=1}^mf_j \circ B_j \leq \prod_{j=1}^m \| f_j\|_{1/p_j}
\end{equation*}
\end{proof}
$Q(V)$ is a also bounded polytope. Following lemma shows that we can prove (\ref{sos}) for $p \in P(V)$ and $p \in Q(V)$ respectively if we can prove (\ref{sos}) for extreme points of $P(V)$ and $Q(V)$. 
\begin{lemma}
Suppose (\ref{sos}) holds for $p_1, p_2$, then (\ref{sos}) holds for $p = \theta p_1 + (1 - \theta)p_2$ for all $\theta \in [0, 1]$
\end{lemma}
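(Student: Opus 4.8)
The plan is to prove this interpolation (log-convexity) statement by a single application of Hölder's inequality, reusing the Sum-of-Square proof of Hölder from the previous section so that the degree stays controlled. Write $p_{1,j}$ and $p_{2,j}$ for the components of the two feasible vectors $p_1$ and $p_2$, so that the target exponents are $p_j = \theta p_{1,j} + (1-\theta) p_{2,j}$. Given arbitrary nonnegative test functions $h_j$ for the exponent vector $p$, the idea is to factor each $h_j$ into two pieces, one tailored to $p_1$ and one to $p_2$, apply the two hypotheses to the corresponding products over $V$, and glue the resulting bounds together with Hölder.

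Concretely, first I would decompose $h_j = f_j^{\theta} g_j^{1-\theta}$ by setting $f_j = h_j^{p_{1,j}/p_j}$ and $g_j = h_j^{p_{2,j}/p_j}$; since $\theta\, p_{1,j}/p_j + (1-\theta)\, p_{2,j}/p_j = 1$, this reproduces $h_j$ exactly, and the product factors as $\prod_j (h_j \circ B_j) = \big(\prod_j (f_j \circ B_j)\big)^{\theta}\big(\prod_j (g_j \circ B_j)\big)^{1-\theta}$. Applying Hölder's inequality with exponents $\theta$ and $1-\theta$ to the sum over $V$ gives
\[
\sum_V \prod_j (h_j \circ B_j) \le \Big(\sum_V \prod_j (f_j \circ B_j)\Big)^{\theta}\Big(\sum_V \prod_j (g_j \circ B_j)\Big)^{1-\theta},
\]
after which the hypothesis at $p_1$ bounds the first factor and the hypothesis at $p_2$ bounds the second, yielding $\prod_j \|f_j\|_{1/p_{1,j}}^{\theta}\,\|g_j\|_{1/p_{2,j}}^{1-\theta}$.

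The last and most delicate step is to verify that this product collapses exactly to $\prod_j \|h_j\|_{1/p_j}$, which is where the choice of exponents is designed to pay off: $\|f_j\|_{1/p_{1,j}} = \big(\sum_x h_j(x)^{1/p_j}\big)^{p_{1,j}}$ and likewise $\|g_j\|_{1/p_{2,j}} = \big(\sum_x h_j(x)^{1/p_j}\big)^{p_{2,j}}$, so the common base $\sum_x h_j^{1/p_j}$ is raised to $\theta p_{1,j} + (1-\theta)p_{2,j} = p_j$, recovering $\|h_j\|_{1/p_j}$. The main obstacle I anticipate is the degenerate coordinate $p_j = 0$, where the exponent $p_{1,j}/p_j$ is a formal $0/0$; since nonnegativity forces $p_{1,j} = p_{2,j} = 0$ whenever $p_j = 0$, in such a coordinate I would instead take $f_j = g_j = h_j$ and read the norm as $\|\cdot\|_\infty$, which again matches on both sides. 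A final check is that chaining the established Sum-of-Square certificates for Hölder and for the two endpoint inequalities produces a genuine Sum-of-Square proof, with the degree adding up as in the deduction rules, so that the interpolation costs nothing beyond the endpoint proofs.
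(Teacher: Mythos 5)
Your proposal is correct and is essentially the paper's own proof: the substitution $f_j = h_j^{p_{1,j}/p_j}$, $g_j = h_j^{p_{2,j}/p_j}$ is exactly the paper's replacement of $f_j$ by $f_j^{p_1/p}$ and $f_j^{p_2/p}$ in the two hypotheses, followed by the same application of H\"older with exponents $\theta$ and $1-\theta$ and the same recombination of the norms. Your explicit handling of the degenerate coordinates $p_j = 0$ (forcing $p_{1,j} = p_{2,j} = 0$ and reading the norm as $\|\cdot\|_\infty$) is a small refinement the paper leaves implicit.
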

\begin{proof}
Suppose we have (\ref{sos}) for $p_1$ and $p_2$
\begin{eqnarray*}
\sum\prod_{j=1}^mf_j\circ B_j & \leq & \prod_{j=1}^m(\sum f_j^{1/p_1})^{p_1} \\
\sum\prod_{j=1}^mf_j\circ B_j & \leq & \prod_{j=1}^m(\sum f_j^{1/p_2})^{p_2}
\end{eqnarray*}
Replace $f_j$ by $f_j^{p_1/p}$ and $f_j := f_j^{p_2/p}$ respectively we get
\begin{eqnarray*}
\sum\prod_{j=1}^mf_j^{p_1/p} \circ B_j & \leq & \prod_{j=1}^m(\sum f_j^{1/p})^{p_1} \\
\sum\prod_{j=1}^mf_j^{p_2/p} \circ B_j & \leq & \prod_{j=1}^m(\sum f_j^{1/p})^{p_2}
\end{eqnarray*}
Multiply above inequality with exponents $\theta$ and $(1 - \theta)$
\begin{equation*}
\left(\sum\prod_{j=1}^mf_j^{p_1/p} \circ B_j \right)^{\theta} \left(\sum\prod_{j=1}^mf_j^{p_2/p} \circ B_j \right)^{1 - \theta} \leq \prod_{j=1}^m(\sum f_j^{1/p})^{p}
\end{equation*}
By Holder's inequality
\begin{equation*}
\sum\prod_{j=1}^mf_j \circ B_j \leq  \left(\sum\prod_{j=1}^mf_j^{p_1/p} \circ B_j \right)^{\theta} \left(\sum\prod_{j=1}^mf_j^{p_2/p} \circ B_j \right)^{1 - \theta}
\end{equation*}
Finally we have 
\begin{equation*}
\sum\prod_{j=1}^mf_j \circ B_j \leq \prod_{j=1}^m(\sum f_j^{1/p})^{p}
\end{equation*}
\end{proof}
By replacing pseudo expectation with summation we also have
\begin{corollary}
Suppose (\ref{sos}) holds for $p_1, p_2$, then (\ref{sos}) holds for $p = \theta p_1 + (1 - \theta)p_2$ for all $\theta \in [0, 1]$
\end{corollary}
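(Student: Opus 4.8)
The plan is to re-run the preceding Lemma's argument verbatim, reading every pseudo-expectation $\tilde{\mathbb{E}}$ as an honest summation $\sum_V$. This is legitimate because the uniform distribution on the finite set $V$ is a genuine probability distribution, hence a valid degree-$d$ pseudo-distribution for every $d$; consequently each deduction used in the Lemma specializes directly to the summation setting. Concretely, I would start from the two summation instances of (\ref{sos}) at $p_1$ and $p_2$, apply the coordinatewise substitutions $f_j\mapsto f_j^{p_{1,j}/p_j}$ and $f_j\mapsto f_j^{p_{2,j}/p_j}$, then raise the resulting inequalities to the powers $\theta$ and $1-\theta$ and multiply. Since $\theta p_1+(1-\theta)p_2=p$, the two right-hand sides combine to $\prod_{j=1}^m\|f_j\|_{1/p_j}$.

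The only step that is not a mechanical copy is the closing use of Holder's inequality, where I would invoke the summation form already proved in Section 2 (the Corollary to Holder's Lemma) in place of its pseudo-expectation analogue. Writing $F=\prod_{j=1}^m f_j^{p_{1,j}/p_j}\circ B_j$ and $G=\prod_{j=1}^m f_j^{p_{2,j}/p_j}\circ B_j$, the pointwise identity $\theta(p_{1,j}/p_j)+(1-\theta)(p_{2,j}/p_j)=1$ gives $\prod_{j=1}^m f_j\circ B_j=F^\theta G^{1-\theta}$, and Holder with exponents $\theta,1-\theta$ yields $\sum_V F^\theta G^{1-\theta}\le(\sum_V F)^\theta(\sum_V G)^{1-\theta}$. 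Chaining this with the multiplied inequality produces (\ref{sos}) for $p=\theta p_1+(1-\theta)p_2$.

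I expect no genuine obstacle: the analytic content was already discharged in the Lemma, and the corollary is that same computation specialized to the uniform distribution. The only points worth a line of checking are that $\theta$ and $1-\theta$ lie in the admissible Holder range and that the substituted functions $f_j^{p_{i,j}/p_j}$ remain nonnegative, both of which hold because each $f_j\ge0$ and $\theta\in[0,1]$.
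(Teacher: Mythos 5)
Your proposal is correct and is essentially identical to the paper's route: the paper also obtains the corollary by specializing the preceding Lemma's argument to the uniform distribution, i.e., running the same substitution-then-H\"older chain with summations in place of pseudo-expectations and invoking the summation form of H\"older's inequality from Section 2. Your explicit coordinatewise notation $f_j^{p_{i,j}/p_j}$ and the factorization $\prod_j f_j\circ B_j = F^{\theta}G^{1-\theta}$ are just a cleaner writing of exactly what the paper does.
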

Next section we give the proof of (\ref{sos}) on extreme points of $P(V)$ and $Q(V)$ respectively.

\section{Prove Brascamp-Lieb inequality on extreme points}
First we prove (\ref{sos}) for extreme points of $P(V)$. The extreme points $p$ in $P(V)$ have following form: there is one $j$ such that $p_j = 1$, and for all other $j' \neq j$, $p_{j'} = 0$. In this case (\ref{sos}) becomes
\begin{equation*}
\sum\prod_{j=1}^m(f_j \circ B_j) \leq \sum f_j\prod_{j' \neq j}^m (\sum \max f_{j'})
\end{equation*}
This holds trivially. So we complete the prove of (\ref{sos}). For the degree of the proof, notice that the highest degree appearing in the proof is in the last expression of the inequality. So the degree of the pseudo distribution is $O(s\sum_js_j)$.

Then we give the Sum-of-Square proof of (\ref{sos}) on extreme points $p \in Q(V)$ by induction on dimension $\dim(V)$. When $\dim(V) = 0$, both left hand side and right hand side become $\prod_{j=1}^mf_j(0)$, (\ref{sos}) holds trivially. When $\dim(V) > 0$, suppose (\ref{sos}) holds for any space with dimension less than $\dim(G)$. We will give a Sum-of-Square proof for (\ref{sos}) on $V$.  

Consider a nontrivial subspace $W$ of $V$ , we can decompose $V = W \oplus V/W$ and decompose $B_j$ into $B^W_j$ and $B^{V/W}_j$ accordingly
\begin{enumerate}
\item $B_j^W : W \to V_j$, restriction of $B_j$ to $W$
\item $B_j^{V/W} : V/W \to B_jV/B_jW$, $B_j^{V/W}(x + W) = B_j(x) + B_jW$
\end{enumerate}
By $B^W$ and $B^{V/W}$, we can define the feasible space $Q(W)$ and $Q(V/W)$ for $W$ and $V/W$ accordingly. 
\begin{enumerate}
\item $Q(W) = \{p \mid p \in [0, 1]^m, \dim(W') \geq  \sum_jp_j\dim(B_jW')\textrm{ for all subspace } W'\textrm{ of } W\}$
\item $Q(V/W) = \{p \mid p \in [0, 1]^m, \dim(W') \geq  \sum_jp_j\dim(B_jW')\textrm{ for all subspace } W'\textrm{ of } V/W\}$
\end{enumerate}
Following proposition \cite{bennett2008brascamp} gives the condition for $p$ to be feasible in $W$ and $V/W$.
\begin{lemma}
Let $W$ be a subspace of $V$, if $\dim(W) =  \sum_jp_j\dim(B_jW)$, 
\begin{equation*}
p \in Q(V) \iff p \in Q(W) \cap Q(V/W)
\end{equation*}
\end{lemma}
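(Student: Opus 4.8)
The plan is to work with the \emph{dimension defect} $f(U) := \dim(U) - \sum_{j} p_j \dim(B_j U)$ attached to each subspace $U$ of $V$, so that membership of $p$ in $Q(V)$ is precisely the statement that $f(U) \le 0$ for every subspace $U$, together with the box constraint $p \in [0,1]^m$, which is literally the same in all three polytopes and needs no argument. The criticality hypothesis then reads simply $f(W) = 0$, and the whole lemma reduces to tracking how $f$ behaves under the decomposition $V = W \oplus V/W$.

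First I would record two transfer identities. For a subspace $W' \subseteq W$ the maps $B_j^W$ are the restrictions $B_j|_W$, so the defect of $W'$ measured inside $W$ coincides with $f(W')$. Every subspace of $V/W$ has the form $U'/W$ for some $U' \supseteq W$; here I would use $\dim(U'/W) = \dim(U') - \dim(W)$ and, since $B_j W \subseteq B_j U'$, the relation $\dim\bigl(B_j^{V/W}(U'/W)\bigr) = \dim(B_j U') - \dim(B_j W)$. Subtracting these and invoking $f(W) = 0$ shows that the defect of $U'/W$ inside $V/W$ equals $f(U') - f(W) = f(U')$. This is where the criticality of $W$ enters essentially, and it is the conceptual heart of the proof.

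With these identities the forward direction is immediate: if $p \in Q(V)$ then $f \le 0$ on all subspaces of $V$, hence in particular on all $W' \subseteq W$ (giving $p \in Q(W)$) and on all $U' \supseteq W$, whose quotients exhaust the subspaces of $V/W$ (giving $p \in Q(V/W)$).

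The reverse direction is the main obstacle and needs one extra structural input: the supermodularity of $f$. I would prove that $U \mapsto \dim(B_j U)$ is submodular, using $B_j(U_1 + U_2) = B_j U_1 + B_j U_2$ together with $B_j(U_1 \cap U_2) \subseteq B_j U_1 \cap B_j U_2$ and the modular law for dimension; since each $p_j \ge 0$ and $\dim(\cdot)$ is itself modular, $f = \dim(\cdot) - \sum_j p_j \dim(B_j \cdot)$ is supermodular, i.e. $f(U_1) + f(U_2) \le f(U_1 + U_2) + f(U_1 \cap U_2)$. Given $p \in Q(W) \cap Q(V/W)$ and an arbitrary subspace $U \subseteq V$, I would apply supermodularity to the pair $(U, W)$ and use $f(W) = 0$ to obtain $f(U) \le f(U + W) + f(U \cap W)$. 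Now $U \cap W$ is a subspace of $W$, so $f(U \cap W) \le 0$ by $p \in Q(W)$, while $(U+W)/W$ is a subspace of $V/W$ of defect $f(U+W)$ by the transfer identity, so $f(U+W) \le 0$ by $p \in Q(V/W)$. Hence $f(U) \le 0$ for every $U$, which is exactly $p \in Q(V)$, completing the plan.
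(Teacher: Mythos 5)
Your proof is correct, but there is nothing in the paper to compare it against: the paper states this lemma as an imported result, citing \cite{bennett2008brascamp}, and gives no argument at all. So your proposal in effect fills a gap rather than reproving something. What you give is essentially the standard argument behind the cited result, organized slightly differently: instead of adding the two dimension counts $\dim(U) = \dim(U\cap W) + \dim\bigl((U+W)/W\bigr)$ and $\dim\bigl(B_j(U\cap W)\bigr) + \dim\bigl((B_jU+B_jW)/B_jW\bigr) \le \dim(B_jU)$ directly, you package the same two facts (modularity of dimension, submodularity of $U \mapsto \dim(B_jU)$) as supermodularity of the defect $f$ and apply it to the pair $(U,W)$. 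The computations are identical; your packaging is clean, and it has one concrete advantage: it works verbatim in this paper's actual setting of sublattices of $\mathbb{Z}^n$ (rank is still modular there and the correspondence theorem still holds), whereas the citation points to the Euclidean-space paper.

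Two things you should make explicit rather than leave implicit. First, your reading of $Q(V)$ as the set where the defect is $\le 0$, i.e.\ $\dim(W') \le \sum_j p_j \dim(B_jW')$ for all $W'$, contradicts the paper's displayed definition of $Q(V)$, which has the inequality reversed. Your reading is the right one: it matches the hypothesis of the main theorem and the paper's only later use of $Q(V)$ (deducing $\dim\bigl(\bigcap_{p_j=1}\ker B_j\bigr)=0$), and under the literal reversed definition the lemma is actually false. A counterexample: $V=\mathbb{Z}^2$ with the two coordinate projections, $p=(1,1)$, $W$ the first coordinate axis; then $W$ is critical and every defining inequality of $Q(W)$ and $Q(V/W)$ holds with equality, but the diagonal subspace $U$ has $\dim(U)=1 < 2 = \sum_j p_j\dim(B_jU)$, so the reversed inequality defining $Q(V)$ fails and the backward implication breaks. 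Second, a small structural remark: in your converse direction the two invocations of $f(W)=0$ cancel each other, since supermodularity gives $f(U) \le \bigl[f(U+W)-f(W)\bigr] + f(U\cap W)$ and $f(U+W)-f(W)$ equals the defect of $(U+W)/W$ inside $V/W$ whether or not $W$ is critical; so the implication $p \in Q(W)\cap Q(V/W) \Rightarrow p \in Q(V)$ holds for an arbitrary subspace $W$, and criticality is genuinely needed only for the forward direction. Neither point is an error, but stating them would make the proof sharper.
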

We can also define functions $f^W_j$ and $f^{V/W}_j$ on $W$ and $V/W$ as follows
\begin{enumerate}
\item $f^W_j: B_jW \to \mathbb{R}$, restriction of $f_j$ to $B_jW$
\item $f^{V/W}_j: B_jV / B_jW \to \mathbb{R}$, $f^{V/W}(x + B_jW) = \left(\sum_{y \in B_jW} f(x + y)^{1/p_j}\right)^{p_j}$
\end{enumerate}
With these definitions, we can reduce (\ref{sos}) into lower dimension cases for some $W$. 
\begin{lemma}
Suppose there exists nontrivial subspace $W$ such that $\dim(W) =  \sum_jp_j\dim(B_jW)$, by induction hypothesis, we can prove (\ref{sos}) for $p$ by Sum-of-Square.
\end{lemma}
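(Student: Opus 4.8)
The plan is to run the classical Brascamp--Lieb induction through the critical subspace $W$, while checking that every step is realized by a bounded-degree Sum-of-Square manipulation. The engine is the factorization of the sum over $V$ across the decomposition $V = W \oplus V/W$: fixing coset representatives $\bar{x}$ of $V/W$, every $x \in V$ is written uniquely as $x = \bar{x} + w$ with $w \in W$, and by linearity $B_j x = B_j \bar{x} + B_j w$, so that
\begin{equation*}
\sum_{x \in V}\prod_{j=1}^m f_j(B_j x) = \sum_{\bar{x} \in V/W}\;\sum_{w \in W}\prod_{j=1}^m f_j(B_j \bar{x} + B_j w).
\end{equation*}

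First I would bound the inner sum over $W$. For each fixed $\bar{x}$, set $g_j^{(\bar{x})}(y) = f_j(B_j \bar{x} + y)$ for $y \in B_j W$; these are merely relabelings of the original variables $\{f_j(u)\}_{u \in B_j V}$. Since $\dim(W) < \dim(V)$ and, by the preceding feasibility-splitting lemma together with the hypothesis $\dim(W) = \sum_j p_j \dim(B_j W)$, we have $p \in Q(W)$, the induction hypothesis applies on $W$ and yields the Sum-of-Square certificate $\sum_{w \in W}\prod_j g_j^{(\bar{x})}(B_j^W w) \leq \prod_j \|g_j^{(\bar{x})}\|_{1/p_j}$. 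By the definition of $f_j^{V/W}$ the right-hand side equals $\prod_j f_j^{V/W}(B_j^{V/W}\bar{x})$. Because the $g_j^{(\bar{x})}$ are sub-collections of the original variables, this is an honest Sum-of-Square inequality in the original $f_j$, and summing over the finitely many cosets $\bar{x}$ (a sum of SoS-provable inequalities is SoS-provable) gives $\sum_V \prod_j f_j \circ B_j \leq \sum_{\bar{x} \in V/W}\prod_j f_j^{V/W}(B_j^{V/W}\bar{x})$.

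Next I would apply the induction hypothesis to $V/W$. Here $\dim(V/W) = \dim(V) - \dim(W) < \dim(V)$, and the same splitting lemma gives $p \in Q(V/W)$, so the induction hypothesis bounds $\sum_{\bar{x}}\prod_j f_j^{V/W}(B_j^{V/W}\bar{x}) \leq \prod_j \|f_j^{V/W}\|_{1/p_j}$. It then remains to identify this with the target bound: expanding the $1/p_j$-norm cancels the outer $p_j$-th power in the definition of $f_j^{V/W}$, so that for a representative $x$ of the coset $z \in B_jV/B_jW$ we get $\sum_{z} f_j^{V/W}(z)^{1/p_j} = \sum_{z}\sum_{y \in B_j W} f_j(x + y)^{1/p_j} = \sum_{u \in B_j V} f_j(u)^{1/p_j}$, the last equality being the partition of $B_j V$ into cosets of $B_j W$. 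Hence $\|f_j^{V/W}\|_{1/p_j} = \|f_j\|_{1/p_j}$, and chaining the two induction certificates through Holder's inequality (to handle the fractional exponents) establishes (\ref{sos}) for $V$; since the norm identity is exact, the degree grows only by the additive contributions already tracked in the theorem.

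The main obstacle I anticipate is not the real-valued inequality, which is the standard critical-subspace factorization, but ensuring it certifies as a genuine bounded-degree Sum-of-Square proof. Two points need care: that the inner induction on $W$ is applied \emph{uniformly} across all cosets $\bar{x}$ so that its certificate stays polynomial in the original variables $\{f_j(u)\}$; and that the fractional-power norms $\|\cdot\|_{1/p_j}$ are manipulated only through the SoS-provable Holder inequality proved earlier, so that introducing $f_j^{V/W}$ (which raises partial sums to the $p_j$-th power) and later undoing it in the norm identity both remain inside the Sum-of-Square proof system with controlled degree.
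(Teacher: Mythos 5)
Your proposal is correct and follows essentially the same route as the paper's proof: apply the induction hypothesis fiberwise on $W$ over each coset of $V/W$, sum the resulting certificates, then apply the induction hypothesis on $V/W$ to the functions $f_j^{V/W}$, and conclude via $\|f_j^{V/W}\|_{1/p_j} = \|f_j\|_{1/p_j}$. You actually make explicit two steps the paper leaves implicit, namely the invocation of the feasibility-splitting lemma to get $p \in Q(W) \cap Q(V/W)$ and the coset-partition argument behind the norm identity.
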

\begin{proof}
(\ref{sos}) can be written as
\begin{equation*}
\sum_{y \in V/W}\sum_{x \in W}\prod_{j=1}^mf_j(B_jx + B_jy) \leq \prod_{j=1}^m \left(\sum_{y \in B_jV/B_jW}\sum_{x \in B_jW} f(x + y)^{1/p_j}\right)^{p_j}
\end{equation*}
Let $f_j': B_jW \to \mathbb{R}$ such that $f_j'(B_jx) = f_j(B_jx + B_jy)$ for fixed $y$, apply the induction hypothesis on $f_j'$ in $W$
\begin{equation*}
\label{ih}
\sum_{x \in W}\prod_{j=1}^mf_j(B_jx + B_jy) = \sum_{x \in W}\prod_{j=1}^mf_j'(B_jx) \leq \prod_{j=1}^m \| f'^W_j \|_{1/p_j} = \prod_{j=1}^m f^{V/W}_j(B_jy)
\end{equation*}
Apply induction hypothesis on $f^{V/W}_j$ in $V/W$
\begin{equation*}
\sum_{y \in V/W} \prod_{j=1}^m f^{V/W}_j(B_jy) \leq \prod_{j=1}^m \|f^{V/W}_j \|_{1/p_j}
\end{equation*}
Combining above gives
\begin{equation*}
\sum_{y \in V/W}\sum_{x \in W}\prod_{j=1}^mf_j(B_jx + B_jy) \leq \sum_{y \in V/W} \prod_{j=1}^m f^{V/W}_j(B_jy) \leq \prod_{j=1}^m \|f_j\|_{1/p_j}
\end{equation*}
\end{proof}
Now we should consider the case when there is no nontrivial subspace $W$ satisfying $\dim(W) =  \sum_jp_j\dim(B_jW)$. Following proposition \cite{christ2015holder} characterizes this case.
\begin{proposition}
If there is no nontrivial subspace $W$ satisfying $\dim(W) =  \sum_jp_j\dim(B_jW)$, then $p \in \{0, 1\}^m$
\end{proposition}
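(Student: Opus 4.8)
The plan is to exploit the fact that $p$ is an \emph{extreme point} of the polytope $Q(V)$, so that $p$ is pinned down by $m$ linearly independent tight constraints among the box constraints $p_j \ge 0$, $p_j \le 1$ and the subspace constraints $\dim(W) \le \sum_j p_j \dim(B_j W)$. A subspace constraint is tight exactly when $W$ is critical, i.e.\ $\dim(W) = \sum_j p_j \dim(B_j W)$. By hypothesis no proper nonzero $W$ is critical (such a $W$ would already have been handled by the previous reduction lemma), so the only critical subspaces are $\{0\}$, whose constraint is the identically-zero functional, and possibly $V$ itself. First I would record the trivial branch: if $V$ is not critical either, then the only tight constraints at $p$ are box constraints, whose gradients lie among $\pm e_1,\dots,\pm e_m$; these span $\mathbb{R}^m$ only if every coordinate sits at a bound, forcing $p \in \{0,1\}^m$ immediately.

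The substantive branch is when $V$ is critical. Writing $I_\ast = \{\, j : 0 < p_j < 1 \,\}$ for the set of fractional coordinates, the tight box constraints contribute gradients $\{\, e_j : j \notin I_\ast \,\}$, of rank $m - |I_\ast|$, and the single extra functional $(n_1,\dots,n_m)$ coming from $V$ (here $n_j = \dim(B_j V)$) can raise the rank by at most one. Extremality demands rank $m$, so $|I_\ast| \le 1$. If $|I_\ast| = 0$ we are done, so it remains to rule out $|I_\ast| = 1$; this is the crux of the argument. In that case, writing $I_\ast = \{j_0\}$, extremality additionally forces the $V$-functional to be independent of the box gradients, which means $n_{j_0} \neq 0$.

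To eliminate the case $|I_\ast| = 1$ I would produce a single witness subspace. Let $A = \{\, j : p_j = 1 \,\}$ and set $K = \bigcap_{j \in A} \ker B_j$ (with the convention $K = V$ when $A = \varnothing$). Applying feasibility of $p$ to $W = K$ and using that $B_j K = 0$ for $j \in A$ while $p_j = 0$ for $j \notin A \cup \{j_0\}$ gives $\dim(K) \le p_{j_0}\dim(B_{j_0} K) \le p_{j_0}\dim(K)$; since $p_{j_0} < 1$ this forces $K = \{0\}$. But $K = \{0\}$ says the map $x \mapsto (B_j x)_{j \in A}$ is injective, so $\dim(V) \le \sum_{j \in A} n_j$, whereas criticality of $V$ reads $\dim(V) = \sum_{j \in A} n_j + p_{j_0} n_{j_0}$. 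Subtracting yields $p_{j_0} n_{j_0} \le 0$, and since $p_{j_0} > 0$ we get $n_{j_0} = 0$, contradicting $n_{j_0} \neq 0$ from the previous step. Hence $|I_\ast| = 1$ is impossible and $p \in \{0,1\}^m$.

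The only genuinely delicate point is this last paragraph: the naive dimension count leaves open exactly one fractional coordinate, and that residual case is not ruled out by counting alone. The key idea I expect to need is the witness $K = \bigcap_{j \in A}\ker B_j$, whose feasibility constraint collapses (because $p_{j_0} < 1$) and thereby converts the criticality equality for $V$ into the dimension contradiction $n_{j_0} = 0$. I would also double-check the boundary situations---$A = \varnothing$ (where $K = V$ and feasibility forces $\dim(V) = 0$, impossible in the inductive step) and the exact bookkeeping of which gradients are active---but these are routine once the witness is in hand.
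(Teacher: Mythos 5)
Your proof is correct, but there is nothing in the paper to compare it against: the paper states this proposition as an imported result, citing \cite{christ2015holder}, and supplies no argument of its own. Your proposal therefore fills a genuine gap in the exposition. The route you take --- the vertex characterization of extreme points (active constraint normals must span $\mathbb{R}^m$), the observation that under the hypothesis the only possible nonzero tight subspace normal is $(\dim B_1V,\dots,\dim B_mV)$, the resulting bound of at most one fractional coordinate, and finally the witness $K=\bigcap_{p_j=1}\ker B_j$ to eliminate the single-fractional-coordinate case --- is sound, and it is essentially the linear-programming argument behind the cited result. Two remarks. First, your witness $K$ is exactly the subspace the paper itself invokes in the lemma immediately following this proposition (the Sum-of-Square proof for feasible $p\in\{0,1\}^m$), so your argument splices naturally into the surrounding text, and your deduction $\dim(K)\le p_{j_0}\dim(B_{j_0}K)\le p_{j_0}\dim(K)\Rightarrow\dim(K)=0$ mirrors the paper's own use of that subspace. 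Second, you implicitly make two corrections that the paper needs anyway: the proposition is false without the assumption that $p$ is an extreme point of $Q(V)$ (e.g.\ H\"older's $p=(1/2,1/2)$ has no \emph{proper} nonzero critical subspace yet is fractional --- it just fails to be a vertex), and the defining inequality of $Q(V)$ must be read as $\dim(W)\le\sum_j p_j\dim(B_jW)$, the reverse of what the paper's displayed definition says (a typo, since the theorem statement and every later use have $\le$). Your proof uses both the extremality and the $\le$ direction, which is the right reading; it would be worth stating both explicitly if this proof were inserted into the paper.
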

Next we give a Sum-of-Square proof for all feasible $p \in \{0, 1\}^m$
\begin{lemma}
For all feasible $p \in \{0, 1\}^m$, we have a Sum-of-Square proof for (\ref{sos}).
\end{lemma}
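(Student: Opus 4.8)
The plan is to split the indices according to the value of $p_j$. Write $S = \{\, j : p_j = 1\,\}$, so that $p_j = 0$ for every $j \notin S$. On the right-hand side of (\ref{sos}) each $j \in S$ contributes $\|f_j\|_{1/p_j} = \|f_j\|_1 = \sum_{x_j \in V_j} f_j(x_j)$, while each $j \notin S$ contributes $\|f_j\|_{1/p_j} = \|f_j\|_\infty = \max_{x_j} f_j(x_j)$ (the limiting value of $(\sum f_j^{1/p_j})^{p_j}$ as $p_j \to 0$). I would first discard the indices $j \notin S$ exactly as in the treatment of the extreme points of $P(V)$: since $0 \le f_j(B_j x) \le \max f_j$ for every $x$ and the remaining factors are nonnegative, each summand satisfies $\prod_{j} f_j(B_j x) \le \big(\prod_{j\notin S}\max f_j\big)\prod_{j\in S} f_j(B_j x)$, and this term-by-term inequality is a nonnegative combination of products of the basic nonnegativities $f_j \ge 0$ and $\max f_j - f_j \ge 0$, hence an admissible Sum-of-Square step. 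It therefore suffices to produce a Sum-of-Square proof of $\sum_{x\in V}\prod_{j\in S} f_j(B_j x) \le \prod_{j\in S}\big(\sum_{x_j} f_j(x_j)\big)$.

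The key structural input is that feasibility forces the combined projection to be injective. Consider $W = \bigcap_{j\in S}\ker B_j$, a subspace of $V$. For $j\in S$ we have $B_j W = \{0\}$, and for $j \notin S$ we have $p_j = 0$, so the feasibility constraint $\dim W \le \sum_j p_j \dim(B_j W)$ collapses to $\dim W \le 0$, i.e. $W = \{0\}$. Hence the linear map $x \mapsto (B_j x)_{j\in S}$ from $V$ into $\prod_{j\in S} V_j$ has trivial kernel and is injective: distinct points of $V$ are sent to distinct tuples $(x_j)_{j\in S}$.

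With injectivity in hand the remaining inequality becomes a statement about a repetition-free sub-sum. Expanding the right-hand product, $\prod_{j\in S}\big(\sum_{x_j} f_j(x_j)\big) = \sum_{(x_j)_{j\in S}\in \prod_{j\in S} V_j}\prod_{j\in S} f_j(x_j)$, a sum of monomials each of which is a product of at most $m$ nonnegative variables $f_j(x_j)$. Injectivity identifies the left-hand sum $\sum_{x\in V}\prod_{j\in S} f_j(B_j x)$ with the subcollection of these monomials indexed by the image of $V$, with no repetition. Consequently the difference $\prod_{j\in S}\big(\sum_{x_j} f_j(x_j)\big) - \sum_{x\in V}\prod_{j\in S} f_j(B_j x)$ equals the sum of the monomials indexed by the tuples outside the image, each a product of nonnegative factors; this is exactly a Sum-of-Square (positive-combination) certificate. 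Its degree is the degree $|S| \le m$ of those monomials, consistent with the $s\sum_j s_j$ term of the theorem (here $s=1$). Combining this with the pointwise reduction of the $j\notin S$ factors yields the full claim.

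The part that requires care is the reduction to injectivity: one must invoke the feasibility inequality in the correct direction and on the correct subspace $\bigcap_{j\in S}\ker B_j$, and then observe that injectivity of a linear map is precisely what turns the left sum into a repetition-free sub-sum of the expanded right-hand product. Once that is established the certificate is immediate, since all $f_j$ are nonnegative and neither a squaring nor a Holder step is needed at this base case.
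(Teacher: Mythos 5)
Your proposal is correct and follows essentially the same route as the paper: you apply the feasibility condition to the subspace $\bigcap_{p_j=1}\ker(B_j)$ to conclude that the joint projection $x \mapsto (B_jx)_{p_j=1}$ is injective, identify the left-hand sum with a repetition-free sub-sum of the expanded right-hand product, and absorb the $p_j=0$ factors via $f_j \leq \max f_j = \|f_j\|_\infty$. The only (harmless) difference is bookkeeping order --- you discard the $p_j=0$ factors first, while the paper handles the $p_j=1$ block first and multiplies the $\|f_k\|_\infty$ bounds in at the end --- and your writeup makes the positivity certificate for the sub-sum step slightly more explicit than the paper does.
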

\begin{proof}
Since $p$ is feasible, apply condition $p \in Q(V)$ on subspace $\bigcap_{p_j = 1}\ker(B_j)$
\begin{equation*}
\dim\left(\bigcap_{p_j = 1}\ker(B_j)\right) = 0
\end{equation*}
Then there is no $x \neq y \in V$ such that $B_jx  = B_jy$ for all $j$. If we only consider $f_j$ such that $p_j = 1$, all terms on the left hand side also appears on the right hand side at least once, so
\begin{equation*}
\sum_{x \in V}\prod_{p_j=1}f_j(B_jx) \leq \prod_{p_j=1} \| f_j \|_1
\end{equation*}
when $p_j = 0$,  $\|f_j\|_{1/p_j} = \max f_j $
\begin{equation*}
\sum_{x \in V} \prod_j f_j(B_jx) = \sum_{x \in V}\prod_{p_j=1}f_j(B_jx)\prod_{p_k = 0}f_j(B_kx) \leq \prod_{p_j=1} \| f_j \|_1 \prod_{p_k = 0} \|f_k\|_\infty = \prod_{j} \|f_j\|_{1/p_j}
\end{equation*}
\end{proof}

We have given a Sum-of-Square Proof for (\ref{sos}), the degree of the proof is discussed by
\begin{lemma}
The degree of the Sum-of-Square proof for (\ref{sos}) is $O(n^mm^{m/2}+s\sum_{j=1}^ms_j)$
\end{lemma}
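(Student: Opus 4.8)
The plan is to account for the degree contributions from each of the two mechanisms used in the preceding proof and combine them. The overall proof proceeds by reducing to extreme points of two polytopes $P(V)$ and $Q(V)$ via the convexity lemma, so I would first bound the degree needed to prove \eqref{sos} at a single extreme point, then track how the convex-combination step (which invokes Holder's inequality) compounds the degree.

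First I would handle the extreme points of $P(V)$. There the inequality is trivial, but the highest-degree object appearing is the right-hand side $\prod_{j=1}^m \|f_j\|_{1/p_j}$, where each factor involves $f_j^{1/p_j} = f_j^{s/s_j}$ raised back to the $p_j = s_j/s$ power; clearing denominators to an honest polynomial certificate forces degree $O(s\sum_{j=1}^m s_j)$, which gives the second term in the claimed bound. Second I would examine the induction on $\dim(V)$ used for $Q(V)$: each time we split $V = W \oplus V/W$ we invoke the induction hypothesis twice and must re-express the intermediate functions $f_j^{V/W}$, whose definition again carries the $1/p_j$ exponentiation. Crucially, the recursion depth is bounded by $n = \dim(V)$, and at each level we apply Holder across $m$ factors, so the degree from nested Holder applications accumulates multiplicatively rather than additively, producing a factor that grows like $n^m$ with a $m^{m/2}$ correction coming from the $(1+\varepsilon)$-type iterative Holder approximation of Lemma 2 applied at each of the $m$ coordinates.

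The heart of the argument is therefore to show that the two sources do not interact in a way that blows up the degree beyond their sum: the $s\sum_j s_j$ contribution comes purely from representing the $\|\cdot\|_{1/p_j}$ norms as polynomials, while the $n^m m^{m/2}$ contribution comes from the depth-$n$ recursion with $m$-fold Holder at each node. I would make this precise by writing $D(d)$ for the degree needed on a space of dimension $d$ and deriving the recurrence $D(d) \le c\, m^{1/2}\, D(d-1) + O(s\sum_j s_j)$, where the factor $c\,m^{1/2}$ is the per-level multiplier extracted from the iterated Cauchy-Schwarz proof of Holder in Lemma 2. Unrolling this recurrence over the $n$ levels yields $D(n) = O\!\left(n^m m^{m/2} + s\sum_{j=1}^m s_j\right)$, matching the statement.

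The hard part will be justifying the per-level multiplier $m^{m/2}$ rigorously, since Holder's inequality here is applied to a product of $m$ functions rather than two, and the iterative Cauchy-Schwarz reduction of Lemma 2 must be generalized to the $m$-fold setting while controlling how the fractional exponents interact with the splitting of $f_j$ into $f_j^W$ and $f_j^{V/W}$. In particular, one must verify that the exponents arising at depth $k$ of the recursion remain expressible with the same common denominator $s$, so that the $s\sum_j s_j$ term does not itself get multiplied by the recursion depth; this is what keeps the two contributions additive rather than multiplicative.
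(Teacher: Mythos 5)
Your proposal correctly accounts for the $s\sum_{j=1}^m s_j$ term (the degree of the final inequality and of the proof at extreme points of $P(V)$), but it misidentifies where the $n^m m^{m/2}$ term comes from, and the mechanism you propose does not actually produce that bound. In the paper, $n^m m^{m/2}$ has nothing to do with recursion depth or with compounding Holder applications across levels of the induction. It comes from the \emph{arithmetic size of the extreme points of $Q(V)$}: those extreme points are basic feasible solutions of the linear constraints $\dim(W) \geq \sum_j p_j \dim(B_j W)$, whose coefficient matrix $A$ has integer entries (dimensions) bounded by $n$. By Cramer's rule the denominators in the fractional representation of such a $p$ are bounded by $\det(A)$, and by Hadamard's inequality $\det(A) \leq n^m m^{m/2}$. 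Since the degree of the Sum-of-Square proof of Holder's inequality scales with the denominators of the exponents $p_j$ (as your Lemma~3 bookkeeping shows, degree $s(s_1+s_2)(\deg f + \deg g)$ where $s$ is the common denominator), the inequalities at extreme points of $Q(V)$ cost degree governed by $n^m m^{m/2}$. Your proposal never invokes Cramer's rule or Hadamard's inequality, which are the two essential tools here.

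Moreover, your proposed recurrence is arithmetically inconsistent with the claimed conclusion: unrolling $D(d) \leq c\,m^{1/2} D(d-1) + O(s\sum_j s_j)$ over $n$ levels gives $D(n) = O\bigl((c\sqrt{m})^n\bigr)$ plus lower-order terms, which is exponential in $n$ with base $c\sqrt{m}$; this is not $O(n^m m^{m/2} + s\sum_j s_j)$ (for fixed $m$ and growing $n$, the former is exponential in $n$ while the latter is polynomial). The recursion in the paper does not multiply degrees per level in this way: the induction on $\dim(V)$ reuses the \emph{same} exponent vector $p$ on $W$ and $V/W$ (Lemma~5 guarantees $p \in Q(W) \cap Q(V/W)$), so the common denominator of the exponents --- which is what controls the Holder degrees --- does not grow with the recursion depth at all. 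Your closing worry about keeping the two contributions additive is thus resolved not by a delicate interaction argument but by the observation that the denominators are fixed once and for all by the extreme point, and bounded once and for all by Cramer--Hadamard.
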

\begin{proof} To count the degree of the proof, we need to identify the polynomials with largest degree inside the proof. There are two cases for polynomials with largest degree.
\begin{enumerate}
\item The final inequality being proved has the largest degree
\item The inequality associated with extreme points has the largest degree
\end{enumerate}
For the first case, the degree the final inequality is $s\sum_{j=1}^ms_j$. For the second case, the degree of those inequalities is related to the fractional representation of $p$ as the extreme points of $Q(V)$. Since $Q(V)$ is a polytope defined by linear constraints, the extreme points of $Q(V)$ are basic feasible solutions of these constraints. By Cramer's rule,  the size of the fractional representation of $p$ is bounded by $\det(A)$ where $A$ is $m \times m$ coefficient matrix from the constraints. Since the each entry of $A$ is bounded by dimension $n$, by Hadamard's inequality \cite{antman1999jacques}, $\det(A) \leq n^mm^{m/2}$. So the degree of Sum-of-Square proof is poly$(n^mm^{m/2}, s\sum_{j=1}^ms_j)$. 
\end{proof}
So when the degree of the final inequality $s\sum_{j=1}^ms_j = O(1)$, the degree of the proof is $poly(d^{O(1)})$. This finish the proof of (\ref{sos}).


\bibliographystyle{plain}
\bibliography{Brascamp-Lieb}

\end{document}